\documentclass[runningheads]{llncs}
\usepackage{amsmath} 
\usepackage{amssymb}
\usepackage{comment}
\usepackage{xspace}
\usepackage{subcaption} 
\usepackage{graphicx}
\usepackage{tikz}
\usetikzlibrary{math, calc, backgrounds}
\usepackage{listings}
\usepackage{xcolor}
\usepackage{hyperref}
\usepackage{bbding}
\usepackage{todonotes}
\newcommand{\noI}{\ensuremath{\emptyset}\xspace}
\begin{document}
\title{Onto-DP: Constructing Neighborhoods for Differential Privacy on Ontological Databases}
\titlerunning{Onto-DP: DP on Ontological Databases}
%
\author{Yasmine Hayder \Envelope \and Adrien Boiret \and Cédric Eichler\orcidID{0000-0003-3026-1749} \and Benjamin Nguyen}
\authorrunning{Y. Hayder et al.}
%
\institute{LIFO, INSA CVL, Univ. Orléans, Inria, France\\
\email{firstname.lastname@insa-cvl.fr}\\
}
\maketitle              
\begin{abstract}
In this paper, we investigate how attackers can discover sensitive information embedded within databases by exploiting inference rules. We demonstrate the inadequacy of naively applied existing state of the art differential privacy (DP) models in safeguarding against such attacks. 

We introduce ontology aware differential privacy (Onto-DP), a novel extension of differential privacy paradigms built on top of any classical DP model by enriching it with semantic awareness. We show that this extension is a sufficient condition to adequately protect against attackers aware of inference rules.

\keywords{Privacy \and Differential Privacy \and Inferences \and Ontology.}\end{abstract}

\section{Introduction} 
Databases in general, and semantic databases such as Knowledge Graphs (KGs) in particular, are often used to store personal and/or private information, such as healthcare data~\cite{abu2023healthcare}. In this article, we are interested in the \textit{privacy} issues linked to protecting the personal information in databases when publishing the results of queries. These privacy questions have been a pressing issue since seminal works of Sweeney~\cite{DBLP:journals/ijufks/Sweene02} on $k$-anonymity, and a whole field called \textit{differential privacy} (DP)~\cite{dwork2006differential} has seen a very strong development these last twenty years. DP has garnered great interest within both theoretical and applied database and privacy communities. The general idea behind DP is to bound the relative information gain on the underlying data that a querier obtains when observing the results of their query. One of the most important features of DP is that the approach (supposedly) makes no assumption on any background knowledge that the attacker may possess, which was an important breakthrough, compared to previous works such as $k$-anonymity, where a large part of the practical difficulties when trying to evaluate the security of the approach is that it depends on the attacker's background knowledge. However, there is in fact an implicit dependency of DP on some kind of attacker background knowledge: DP is based on the concept of \textit{adjacent} databases, loosely defined as two databases that differ by ``one element''.  In this article, we consider that the adjacent databases of a given database $D$ are given by defining a distance $d$ and its associated \textit{neighborhood} $N_d(D)$. Once this neighborhood definition is given, it is then possible to directly apply state-of-the-art DP mechanisms to protect the data, such as adding random noise drawn from a \textit{Laplace distribution}. 

A lot of research in DP has gone into building optimal mechanisms, however we focus on a much less studied question: \textit{how to correctly define this distance}, in order to capture what we call \textit{semantic attackers} that possess knowledge about inference rules (data dependency). To our knowledge, we are the first to (correctly) build such a distance. The main difficulty of this research question is that classical distances used in DP are expressed over a representation-dependent notion of single datum (e.g. relational databases differ by a single tuple, graph databases differ by a single node or edge, RDF databases differ by a single triple, etc.), but this may not precisely translate to two databases differing in one fact, in the semantic sense, since data items are often correlated, especially in graph structured databases. Works such as Pufferfish~\cite{kifer2014pufferfish} provide a very wide setting to customize privacy for dependent data by defining secrets, alternative worlds that must be indistinguishable, and an attacker’s background knowledge. However, while DP and deterministic inference considerations can be expressed in this setting
with a prohibitively comprehensive class of attackers,
in practice showing that a process respects a privacy constraint or not
is best checked on a small set of worst-case-scenario attackers.
No such suitable definition that we know of exists for databases with inference.

\textbf{To address these limitations}, we (i) formalize both the concepts of a \textit{semantically aware attacker}, \textit{i.e.} one that knows how to infer facts from existing information and knows all but one information, and of a distance providing a \textit{well suited defense} w.r.t. a class of attackers, in the context of a DP-based protection; (ii) demonstrate that classical DP approaches on databases, while \textit{well suited} w.r.t. semantic-unaware attackers, are \textit{ill suited} w.r.t. attackers aware of inference rules; (iii) show that such ill-suitedness can lead to incorrect estimation of privacy and leaks of supposedly protected data; (iv) propose \textit{onto-DP}, a consistent extension built on top of existing DP models, and demonstrate that it is (by construction) \textit{well-suited} for semantic-aware attackers.

The remainder of this paper is structured as follows. Sec.~\ref{sec:bg} introduces the background on DP and (semantic) databases. 
The problem we study is presented and formalized in Sec.~\ref{sec:pbstate}. The incorporation of inference rules and the possible mismatch is discussed in Sec.~\ref{sec:fullysemantic}. The proposed distance that is well-suited to semantic-aware attackers is presented in Sec.~\ref{sec:solution}. The related work and positioning of our paper are presented in Sec.~\ref{sec:rw}. Finally, Sec.~\ref{sec:ccl} concludes and presents future research directions.

\section{Background and notations}
\label{sec:bg}
In this section, we provide the background on the data model, inference rules, differential privacy, and distances necessary for the theoretical foundations of this paper. \\

\noindent\textbf{Private database $D$.} We consider a database $D$ containing (sensitive) information to be queried. We do not make any extra hypothesis on the structure of $D$: it may be relational, a KG, etc.
We note $\mathcal{D}$ the set of databases considered and $D \subseteq D'$ the fact that $D$ is contained in $D'$. As we consider attackers that conduct inferences, we will instantiate $D$ as a KG in our examples, but our results hold in the general case. \\

\noindent\textbf{Knowledge graphs} are structured representations in graphs that model real-world entities as nodes and their relationships as edges, or (subject-predicate-object) triples. RDF and Neo4j~\cite{vukotic2014neo4j} are common frameworks to represent KGs.

\begin{example}[Hospital DB]\label{ex:KG} A toy example knowledge graph database
\label{ex:hospital-db}
\begin{lstlisting}[frame=single]
# Example Hospital Database
@prefix ex:   <http://example.com/hospital#> .
ex:d1	a	ex:doctor ;
	ex:worksIn ex:dept1 ;
	ex:hasPatient ex:p1 .
ex:dpt1	a	ex:dept .
ex:p1	a 	ex:patient .
\end{lstlisting}
\end{example}

\noindent\textbf{Inference Rules $I$} are mechanisms that allow the derivation of new knowledge from existing facts by applying logical reasoning to help enrich the database with new facts. They may be expressed in various knowledge representation languages, from the simplest ones like RDFS~\cite{rdfs} to much more expressive languages such as SWRL~\cite{horrocks2004swrl}. 
Reasoner engines such as Hermit~\cite{shearer2008hermit} 
are used to derive new facts in the database, based on its contents and the rules considered. 

\begin{example}[Inferring new tuples] \label{ex:saturated-hospital-db} Consider an inference rule stating that a patient under the care of a physician working in a particular department is a patient in said department, written in SWRL "\textit{human readable syntax}" format~\cite{horrocks2004swrl}:
\[
IR = \{\texttt{hasPatient(?x, ?y)} \land  \texttt{worksIn(?x, ?z)} \Rightarrow \texttt{patientIn(?y, ?z)} \}
\]
We show next the hospital database of Example~\ref{ex:hospital-db}, after applying a reasoner.
\begin{lstlisting}[frame=single]
# Example Hospital Database after reasoning
@prefix ex:   <http://example.com/hospital#> .
ex:d1	a	ex:doctor ;
	ex:worksIn ex:dept1 ;
	ex:hasPatient ex:p1 .
ex:dpt1	a	ex:dept .	
ex:p1	a 	ex:patient .
ex:p1	ex:patientIn ex:dept1 .
\end{lstlisting}

\end{example}
We formally define the inference system as follows:
\begin{definition}[Inference System]
    Let $\mathcal D$ be a space of databases, and $I$ a set of inference rules. An inference system $I:\mathcal{D}\rightarrow \mathcal{D}$ is a function that associates some database $D$ of $\mathcal{D}$ to its \emph{saturated} version $I(D)$ which is obtained by applying all the rules in $I$.
        It is idempotent: $I(I(D))=I(D)$.
\end{definition}

We note $I=\noI$ to indicate that \textit{there are no inference rules, thus no extra information may be inferred.} Formally, this means $I$ is the identity function.
 We say that a database $D$ is a \textit{saturation} of $D^{(-1)}$ by inference system $I$ if  $I(D^{(-1)}) = D$. 
 Conversely, we say that $D^{(-1)}$ is an antecedent of $D$ if $D$ is a saturation of $D^{(-1)}$. Note that the database in Example~\ref{ex:saturated-hospital-db} is indeed saturated by the inference system $I$, applying the inference rule $IR$.  \\

\noindent\textbf{Differential Privacy.}
DP~\cite{dwork2006differential} is possibly the most well-established criterion in the privacy research community. It ensures that an attacker who observes the outcome of a query cannot infer the presence or absence (and hence the value) of any particular sensitive datum in the dataset. 

\begin{definition}[$\varepsilon$-Differential privacy DP~\cite{dwork2006differential}
] Given $\varepsilon > 0$, a function $f$ defined on $\mathcal{D}$, and a distance $d$ over $\mathcal{D}$, $f$ satisfies $\varepsilon$-DP if for all  $(D_1, D_2)\in\mathcal{D}^2$ such that $d(D_1, D_2) = 1$, and for all subsets $S$ of the range of $f$, we have:
$$
\Pr[f(D_1) \in S] \leq e^{\varepsilon} \times \Pr[f(D_2) \in S] 
$$
where probability is taken over the randomness of $f$. In this case, we say that $D_1$ and $D_2$ are $\varepsilon$-indistinguishable~\cite{DBLP:conf/tcc/DworkMNS06} where $\varepsilon$ represents the privacy budget and parameterizes the protection. 
\end{definition}

\vspace{0.2cm}
\noindent\textbf{Implementing DP.} A classical method to implement a DP mechanism for numeric queries is to return a noisy answer rather than the true query result~\cite{dwork-book}. The added noise must be carefully calibrated  so that the result remains useful while still protecting individual contributions. Its amplitude depends on $\varepsilon$ and  the query's sensitivity $\Delta f$ (called $\ell_1$-sensitivity by~\cite{dwork-book}), i.e. how much it may vary among a neighborhood:

\begin{definition}[Global Sensitivity \(\Delta f\)~\cite{dwork-book}]\label{def:Sensitivity}
Given a numeric query \(f : \mathcal{D} \rightarrow \mathbb{R}\) and a distance $d$ over $\mathcal{D}$, the (global or $\ell_1$) sensitivity of \(f\) is defined as

$\Delta f = \underset{x,y}{\max} \lvert f(x) - f(y) \rvert$

for all \(x, y\) such that $d(x,y) = 1$.
\end{definition}

\vspace{0.2cm}
\noindent\textbf{DP on various database models.}
 DP is immediately applicable to any space $\mathcal{D}$ given a proper distance $d$ or simply a neighborhood definition. Classically, for relational databases, the notion of neighborhood corresponds to two databases $D_1$ and $D_2$ differing by a single tuple. When considering graph databases, two DP-models, relying on two notions of neighborhoods, prevail: k-edge-DP~\cite{estimate-degree-sequence} and node-DP~\cite{node-dp}, where two databases are neighbors if they differ by up to $k$ edges or one node and all its adjacent edges, respectively. \\
 
\noindent\textbf{Bounded and Unbounded DP.} There exist two ways to compute distances for differential privacy: unbounded and bounded  DP~\cite{NearH21}. In bounded DP, two datasets are considered neighbors if one can be obtained by editing one sensitive piece of information within the other. In unbounded DP, two datasets are neighbors if they differ by the addition or deletion of a single piece of information.

\noindent We now define \textit{paired} bounded and unbounded distances, an important new concept that we introduce to link the attacker and defender models:
 \begin{definition}[Paired bounded/unbounded distances]\label{def:PairedDistance}
Two distances $d_u$ $d_b$, unbounded and bounded, respectively, defined on $\mathcal{D}$ are said to be paired if :
 
\begin{align*}
\forall (D_1, D_2) \in \mathcal{D}, \quad
d_b(D_1,D_2) = 1  \iff \exists & D_0 ,  D_0 \subseteq D_1 \wedge d_u(D_0, D_1) = 1 \\
&\quad  \wedge D_0 \subseteq D_2 \wedge d_u(D_0,D_2) = 1
\end{align*}

 \end{definition}

\section{Problem formalization and analysis}
\label{sec:pbstate}

In this paper, we consider a curator (defender) relying on DP where the true database $D$ is $\varepsilon$-indistinguishable from its neighbors according to some distance $d$.  On the other hand, the attacker tries to determine the true database among a set of databases it considers possible. 

Our attackers generalize classical worst-case-scenario attackers that are just one datum (e.g. tuple, edge, or line) away from knowing the full database, to a setting where fact inference is possible. This means that the attacker is considering \textit{unbounded} distance neighbors where exactly one piece of information $\iota_{0}$ (i.e. the missing datum and all other derived from it) is added to its prior knowledge. On the other hand, the curator (who knows $D$) must protect against \textit{any} such attacker (for all possible values of $\iota_{0}$). The  $\varepsilon$-indistinguishable databases should hence be the union of all the databases considered by \textit{any} attacker, thus the union of all the databases containing all information of $D$ except for one plus all possible variations of the missing information. This is by construction \textit{exactly} the set of databases that are \textit{bounded} neighbors of $D$. We will thus consider any paired bounded/unbounded distances $d_b$ and $d_u$, with $d_b$ the distance used by the \textit{curator} (adding the noise) and $d_u$ used to characterize the attacker.

In this section, we formalize the attackers, the defense and attack spaces, and the concept of well-suitedness. By using these definitions, we are able to show that in the classical case with no consideration of inferences, existing DP models are well suited for our attackers. \textit{An interesting problem, studied in the rest of the paper, arises when the attacker leverages knowledge of  inference rules}. In this case, classical DP models are no longer well-suited. The main result of this paper is to restore this property of well-suitedness with an extension of classical DP models that applies for various distances accounting for ontologies. \\

\noindent\textbf{Defense space.} A defense space is a mapping from $\mathcal D$ to $2^{(\mathcal{D})}$
that maps each database to a set of decoys. In DP, this set of decoys (and their corresponding defense space) is the neighborhood of the true database according to the chosen distance (plus the original database itself).
\begin{definition}[Defense space]\label{def:DefenseSpace}
    Let $d_b$ be a (bounded) distance over databases. The defense space $N_{d_b}$ of $d_b$ maps each database $D$ to $ \{D' | d_b(D,D') \leq 1\}$.
\end{definition}

\vspace{0.2cm}
\noindent\textbf{Attacker model.}
An attacker is an observer with partial knowledge of the database (i.e., a subset of data known to be true). This knowledge induces a set of databases that are consistent with their information and therefore plausible as the true database state, which we call \textit{attack space}. 
We consider a \textit{worst-case} (i.e. very knowledgeable) attacker that has knowledge of the database \textit{up to exactly one} missing datum, as in~\cite{balle2022reconstructing,lee2011much}. Such an attacker (that we term \textit{up-to-one} attacker) and its attack space are defined as follows: 

\begin{definition}[Up-to-one attacker and its attack space]\label{def:Attack}
    Let $d_u$ be an (unbounded) distance on $\mathcal D$,
    $I$ be an inference system, and
    $D_0$ be a database modeling the prior knowledge of the attacker.
    We note $A_{d_u}^{I}(D_0)$ the up-to-one attacker on distance $d_u$, 
    aware of inference system $I$, and of prior $D_0$. $A_{d_u}^{I}(D_0)$
    considers all  saturated graphs $D' \in \mathcal{D}$ such that  $\exists D'^{(-1)}$ such that
        \begin{itemize}
            \item[$\bullet$] $D'=I(D'^{(-1)})$
            \item[$\bullet$] $D_0 \subseteq D'^{(-1)}$
            \item[$\bullet$] $d_u(D_0,D'^{(-1)}) = 1$ 
        \end{itemize}
\end{definition}

Intuitively, such an attacker has a prior knowledge $D_0$ and lacks exactly one datum of information, along with any data that can be directly inferred from it. The attacker considers all databases $D'^{(-1)}$ obtained by augmenting $D_0$ with exactly one possible instantiation of the missing datum (i.e. $d_u(D_0,D'^{(-1)}) = 1$) and saturates them using the inference system $I$.  
By abuse of notation, $A_{d_u}^{I}(D_0)$ denotes both an attacker and its attack space and $A_{d_u}^{I}$ denotes both the class of attackers and the union of their attack spaces. \\

\noindent\textbf{Well-suited defense with regard to a class of attacker.} A defense space is appropriately calibrated against a class of attackers if the set of graphs they consider plausible is exactly equal to the defense space. 
If we consider DP, and the up-to-one attackers, we get the following formalization:
\begin{definition}[Well-suited DP defense] \label{def:WellSuited}
    Let $d_u, d_b$ be two distances on $\mathcal{D}$ and $I$ be an inference system. 
    We say that $d_b$-DP is a well-suited defense to  the $A_{d_u}^{I}$ up-to-one class of attackers if,
    for all $D\in\mathcal D$,
    $$N_{d_b}(D)=
    \bigcup\limits_{D_0|D\in A_{d_u}^{I}(D_0)}(A_{d_u}^{I}(D_0))$$
\end{definition}

We note that this property is indeed respected for classical DP and distances in the absence of inference rules.
\begin{lemma}[Classical DP models are well-suited w.r.t. semantic-unaware attackers]\label{prop:edgeWellSuited}
    For any $d_u$ and $d_b$ paired distances, $d_b$-DP is a well-suited defense to the $A_{d_u}^{\noI}$ up-to-one class of attackers.
\end{lemma}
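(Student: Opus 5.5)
The plan is to unfold both sides of the equality in Definition~\ref{def:WellSuited} and use the fact that $I=\noI$ is the identity; the pairing condition of Definition~\ref{def:PairedDistance} should then close the argument. First I will simplify the attack space. When $I$ is the identity, the condition $D'=I(D'^{(-1)})$ forces $D'^{(-1)}=D'$. Every database is trivially saturated, so Definition~\ref{def:Attack} reduces to
\[
A_{d_u}^{\noI}(D_0)=\{D'\in\mathcal D \mid D_0\subseteq D' \wedge d_u(D_0,D')=1\}.
\]

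Next I will fix $D\in\mathcal D$ and rewrite the right-hand side. The condition $D\in A_{d_u}^{\noI}(D_0)$ becomes $D_0\subseteq D \wedge d_u(D_0,D)=1$. The union therefore consists of all $D'$ for which some $D_0$ satisfies
\[
D_0\subseteq D \wedge d_u(D_0,D)=1 \wedge D_0\subseteq D' \wedge d_u(D_0,D')=1 .
\]
This is exactly the right-hand side of the equivalence in Definition~\ref{def:PairedDistance}, taking $D_1=D$ and $D_2=D'$. Hence the union equals $\{D' \mid d_b(D,D')=1\}$, together with $D$ itself, which is obtained by choosing $D'=D$ for any admissible $D_0$.

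On the other side, Definition~\ref{def:DefenseSpace} gives $N_{d_b}(D)=\{D' \mid d_b(D,D')\le 1\}$. This is $\{D' \mid d_b(D,D')=1\}\cup\{D\}$, because $d_b$ is a distance and so takes value $0$ only on the diagonal. It is also worth checking that the pairing equivalence does not wrongly force $d_b(D,D)=1$. It does not, because the definition is used only with $D_1\neq D_2$ in mind, and in any case $D$ is added separately on both sides. The two sets then coincide, giving both inclusions at once.

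The step I expect to be the main obstacle is the role of $D$ itself. It belongs to the union only if there exists at least one $D_0\subseteq D$ with $d_u(D_0,D)=1$, which can fail, for instance, for the empty database. There are two ways to handle this. One is to observe that for such a $D$ there are no attackers at all and to treat this degenerate case by convention. The other is to state the mild assumption that every database has an unbounded predecessor, which holds for the usual models (tuple, edge, or node deletion) whenever $D$ is nonempty. I would first try to argue that this is the intended reading and note it explicitly. Apart from this, the proof is a direct definitional unfolding.
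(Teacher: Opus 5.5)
Your argument is correct and is essentially the paper's. The paper gives no standalone proof of this lemma: it is the case $I=\noI$ of Theorem~\ref{prop:FullWellSuited}, using that the $(\noI,d_b)$-onto distance is $d_b$. That theorem is proved by exactly your two-way unfolding through the pairing condition of Definition~\ref{def:PairedDistance}. Your separate treatment of $D$ itself is more careful than the paper, which simply presupposes an attacker that considers $D$: you read the pairing condition only for $D_1\neq D_2$, and you note that a database with no $d_u$-predecessor (e.g.\ the empty one) is considered by no attacker, so the equality needs a convention there. Both you and the paper also tacitly take $d_b$ to be integer-valued when identifying $d_b\le 1$ with $d_b\in\{0,1\}$.
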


\subsubsection*{Objectives and contributions of the rest of the paper.}
In the next part of the paper, we analyze the ill-suitedness (and related ill effects) of classical DP models w.r.t. semantic-aware attackers and propose $(I,d_b)$-DP that integrates ontology to a distance $d_b$, such that:
\begin{enumerate}
    \item $(\noI, d_b)$-DP is equivalent to $d_b$-DP and
    \item if  $d_b,d_u$ are paired with $d_b$ bounded, then $(I,d_b)$-DP is well-suited w.r.t. $A_{d_u}^{I}$.
\end{enumerate}

\section{Mismatch of the $d_b$-DP defense for a semantic-aware attacker $A_{d_u}^{I}$}
\label{sec:fullysemantic}
We now study the impact of the up-to-one adversary knowing inference rules  and show that classical distance are ill-suited against such an attacker.

A curator using $d_b$-DP associates to each database $D$ the defense space $N_{d_b}(D) = \{ D' | d_b(D,D') \leq 1 \}$. Given a database $D$, by Def~\ref{def:Attack}, an instance of $A_{d_u}^{I}$ considering $D$ plausible starts with a prior $D_0$ such that $\exists D^{(-1)}, D_0 \subseteq D^{(-1)} \wedge d_u(D^{(-1)},D_0) = 1 \wedge I(D^{(-1)}) = D$. It considers all other databases $D'$ with an antecedent $D'^{(-1)}$ such that $d_u(D'^{(-1)},D_0) = 1$ and $D_0 \subseteq D'^{(-1)}$. Hence, by Def.~\ref{def:PairedDistance}, it considers databases $D'$ with an antecedent $D'^{(-1)}$ such that $d_b(D^{(-1)},D'^{(-1)}) = 1$. The problem arises here since it is possible to have $d_b(D,D') > 1 $ even though $d_b(D^{(-1)},D'^{(-1)}) = 1$. Databases an attacker with inference rules considers are not necessarily neighbors in the sense of the considered distances $d_u$ or $d_b$. Indeed, inferences on different $d_b$-neighboring databases can create (arbitrarily) distant databases w.r.t. $d_b$, since an arbitrary number of facts could be added during the saturation process (this obviously depends on $I$ and $D$).

Since the defense space of the curator is entirely composed of $d_b$ neighbors of $D$, it immediately follows that there may be a mismatch and that the $d_b$-DP defense cannot always be well-suited for a semantic-aware attacker $A_{d_u}^{I}$. This is particularly problematic since the perceived query variation of an up-to-one attacker may be \textit{greater} than the curator's considered sensitivity. For any query $Q$ applicable on $\mathcal{D}$, we note $\Delta_{I} Q$  the perceived sensitivity of $A_{d_u}^{I}$ attackers, i.e. 

$$\Delta_{I} Q = \max_{D_0 }( \max_{(D,D') \in (A_{d_u}^{I}(D_0))^2} |Q(D)-Q(D')|)$$

\begin{proposition}[Privacy leakage by a $d_b$-DP curator against an $A_{d_u}^{I}$ attacker]\label{prop:semanticMismatch}
Consider $A_{d_u}^{I}$, a class of attackers with knowledge of 
inference rules on a database, and a curator using $d_b$-DP, thus a defense space not considering inference rules. It is possible for such a curator to underestimate the leakage of sensitive private information; i.e. that for a query $Q$, $\Delta_{\noI} Q < \Delta_{I} Q$. In fact, it is possible that $\Delta_{\noI} Q = 0 $ while $\Delta_{I} Q>0$.
\end{proposition}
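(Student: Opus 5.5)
This is an existence statement, so the plan is to exhibit one concrete counterexample: a database space $\mathcal{D}$, paired distances $d_u,d_b$, an inference system $I$, and a query $Q$ with $\Delta_{\noI} Q = 0$ and $\Delta_{I} Q > 0$. The second inequality, $\Delta_{\noI} Q < \Delta_{I} Q$, then follows at once.

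\textbf{Setting.} I take $\mathcal{D}$ to be finite sets of triples, in the spirit of Example~\ref{ex:saturated-hospital-db}. For $d_u$ I use the edge (triple) unbounded distance: $d_u(D_1,D_2)=1$ iff the databases differ by the addition or deletion of one triple. For $d_b$ I use the paired bounded distance: $d_b(D_1,D_2)=1$ iff one triple is replaced by another. These are paired in the sense of Def.~\ref{def:PairedDistance}, with $D_0 = D_1\cap D_2$. For $I$ I take the single rule $IR$ of Example~\ref{ex:saturated-hospital-db}.

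\textbf{Query.} The central choice is a query $Q$ that cannot distinguish any two databases in a common unbounded neighbourhood without inference, but can distinguish them after saturation. I take $Q(D)$ to be the number of \texttt{patientIn} triples in $D$, and restrict attention to databases whose raw data never contains \texttt{patientIn} triples. Such triples are therefore only ever produced by $IR$.

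\textbf{Showing $\Delta_{\noI} Q = 0$.} Without inference, the attack space $A_{d_u}^{\noI}(D_0)$ consists of $D_0$ plus one extra raw triple. Restricted to this domain, every such database has $Q=0$, so $\Delta_{\noI} Q=0$.

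If the restriction to a domain feels too artificial, a less brittle alternative is to let $Q$ count inferred facts, $Q(D)=|I(D)\setminus D|$. On unsaturated databases the un-inferring attacker sees $Q$ constant, equal to $0$, when $I=\noI$, since $\noI$ is the identity.

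\textbf{Showing $\Delta_{I} Q > 0$.} Take the prior $D_0=\{\texttt{worksIn(d1,dept1)}\}$. The attacker $A_{d_u}^{I}(D_0)$ considers the following two antecedents.

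$D_0\cup\{\texttt{hasPatient(d1,p1)}\}$ saturates to a database containing \texttt{patientIn(p1,dept1)}, so $Q=1$.

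$D_0\cup\{\texttt{a(p1,patient)}\}$ is already saturated, so $Q=0$.

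Both antecedents satisfy the conditions of Def.~\ref{def:Attack}, so both saturations lie in the same attack space, and hence $\Delta_{I} Q \ge 1 > 0$.

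To make the mechanism explicit, I will also note that the two saturated databases are at $d_b$-distance $2$, not $1$. So a $d_b$-DP curator never calibrated its noise for this pair, and would add no noise to protect the fact \texttt{hasPatient(d1,p1)} as it manifests in $Q$.

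\textbf{Main obstacle.} The delicate step is making sure $\Delta_{\noI} Q$ is computed over the same class of databases, so that the comparison is fair rather than an artefact of domain restriction. I will state explicitly that both maxima range over priors $D_0$ drawn from the space of raw (unsaturated) databases, with \texttt{patientIn} as a derived-only predicate. This matches the paper's view that inferred facts come only from saturation.
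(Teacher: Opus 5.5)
Your witness for $\Delta_I Q\ge 1$ is fine. The step $\Delta_{\noI}Q=0$, however, only works because $Q$ is evaluated on unsaturated data, and restricting the priors does not resolve the domain issue you flag as the main obstacle.

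Both sensitivities are maxima over attack spaces inside one and the same $\mathcal{D}$. For $\Delta_I Q>0$ you need saturations such as $\{\texttt{worksIn(d1,dept1)},\texttt{hasPatient(d1,p1)},\texttt{patientIn(p1,dept1)}\}$ to belong to $\mathcal{D}$. Your description of $A_{d_u}^{\noI}(D_0)$ then fails even for a raw prior. Take $D_0=\{\texttt{hasPatient(d1,p1)},\texttt{worksIn(d1,dept1)}\}$. The database $I(D_0)=D_0\cup\{\texttt{patientIn(p1,dept1)}\}$ lies in $A_{d_u}^{\noI}(D_0)$ with $Q=1$. If raw databases are also in $\mathcal{D}$, then so does $D_0\cup\{\texttt{a(p1,patient)}\}$ with $Q=0$, whence $\Delta_{\noI}Q\ge 1$.

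The only way to keep every member of a $\noI$ attack space raw while keeping saturations in $\mathcal{D}$ is to confine antecedents to raw databases. But then $\Delta_{\noI}Q=0$ merely records that $Q$ vanishes on unsaturated data. The saturated true database belongs to no $\noI$ attack space, and Lemma~\ref{prop:edgeWellSuited} no longer ties $\Delta_{\noI}Q$ to what a $d_b$-DP curator computes. This is exactly the trivialization the paper avoids by restricting $\mathcal{D}$ to saturated databases; it notes that $Q$ would return $0$ on the antecedent $D^{ex}$.

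Your fallback $Q(D)=|I(D)\setminus D|$ fails outright. Every member of an $A_{d_u}^{I}$ attack space is saturated, so $Q\equiv 0$ there by idempotence and $\Delta_I Q=0$. Moreover, a query cannot change with the attacker class, so $IR$ stays inside $Q$. On unsaturated databases your own two antecedents then give $Q=1$ and $Q=0$ to the $\noI$ attacker, so the inequality is reversed.

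The missing ingredient is the paper's. Take $\mathcal{D}=I(\mathcal{R})$, with $\mathcal{R}$ the \texttt{patientIn}-free databases, and prove that $Q$ is constant (not zero) on each $A_{d_u}^{\noI}(D_0)$, because saturation prevents a one-triple change from touching derived facts. Suppose $D_0\cup\{t_1\}$ and $D_0\cup\{t_2\}$ both lie in $\mathcal{D}$.
\begin{itemize}
\item If $t_1$ is a \texttt{patientIn} triple, then $t_1$ is derived from the raw part of $D_0$. By monotonicity of $IR$ and saturation, $t_1\in D_0\cup\{t_2\}$, which forces $t_2=t_1$.
\item If $t_1,t_2$ are both raw, the two databases carry exactly the \texttt{patientIn} triples of $D_0$.
\end{itemize}
With this lemma your minimal example goes through, and the $\Delta_I$ side is unchanged. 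It becomes a leaner variant of the paper's hospital example. That example adds integrity constraints and argues that the only $d_b$-neighbors inside $\mathcal{D}$ either swap a patient between two doctors of the same department or move a patientless doctor.

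Minor point: your two saturated witnesses have $3$ and $2$ triples, so they are not at $d_b$-distance $2$; they are not comparable under a bounded distance at all.
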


\begin{proof}
We illustrate this mismatch in an example inspired by~\cite{al2024patient}. We consider edge-distances and $\mathcal{D}$ to be KGs such as the one in Fig.~\ref{fig:DBex} representing de-identified data in a hospital. In such graphs, each \( d_i \) is \textit{necessarily} assigned to a  \( dept_j \), and each \( p_k \)  is \textit{necessarily} a patient of a specific \( d_i \).
All of them are instances of a class \texttt{dept}, \texttt{doctor}, or \texttt{patient}, which are represented as nodes. Typing is represented as an edge.  Edges are labeled \textit{hasPatient}, \textit{worksIn}, \textit{hasType}, or \textit{patientIn}. Contrarily to aformentioned relations, the \textit{patientIn} relation is \textit{not mandatory} and links the \texttt{patient} of a \texttt{doctor} to the \texttt{departement} it belongs to.

We consider the inference rule stated in Example~\ref{ex:saturated-hospital-db} that a \texttt{patient} under the care of a \texttt{doctor} working in a particular \texttt{dept} is a \texttt{patient} in said \texttt{dept}.

 \begin{figure}[h]
    \centering
    \begin{tikzpicture}[->, node distance=2cm, scale=0.6, transform shape,  
        main/.style={circle, draw, fill=white, minimum size=1.5cm, inner sep=0pt, align=center, font=\small}]

        \node[main] (dept1) {dept$_1$};
        \node[main, right of=dept1,xshift=15pt] (dept2) {dept$_2$};
        \node[main, below of=dept1] (smith) {$d_1$};
        \node[main, right of=smith,xshift=15pt] (adam) {$d_2$};
        \node[main, below of=smith] (p1) {$p_1$};
        \node[main, right of=p1] (p2) {$p_2$};
        \node[main, right of=p2,xshift=15pt] (p3) {$p_3$};

        \node[main, right of=p3,xshift=15pt] (c3) {patient};
        \node[main, above of=c3] (c2) {doctor};
        \node[main, above of=c2] (c1) {dept};

        \draw[->] (adam) -- node[midway, right] {worksIn} (dept2);
        \draw[->] (smith) -- node[midway, left] {worksIn} (dept1);
        \draw[->] (smith) -- node[midway, left] {hasPatient} (p1);
        \draw[->] (smith) -- node[midway, right] {hasPatient} (p2);
        \draw[->] (adam) -- node[midway, right] {hasPatient} (p3);

        \begin{scope}[on background layer]
        
        \draw[->] (dept2) -- node [near end, above] {hasType} (c1);
        \draw[->] (dept1) to [out=-10, in=190] (c1);
        
        \draw[->] (adam) -- node [near end, above] {hasType} (c2);
        \draw[->] (smith) to [out=-10, in=190] (c2);
        
        \draw[->] (p3) -- (c3);
        \draw[->] (p2) to [out=-10, in=190] (c3);
        \draw[->] (p1) to [out=-20, in=200]  node [pos=.9, below] {hasType} (c3);

        \end{scope}

    \end{tikzpicture}
    \caption{Considered databases $\mathcal{D}$}
    \label{fig:DBex}
\end{figure}
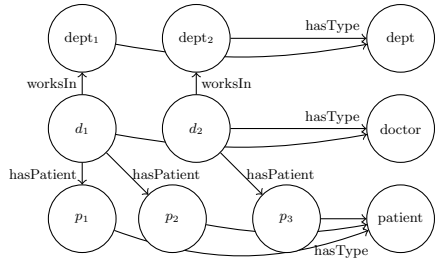

\vspace{0.2cm}

We now consider an example database $D^{ex}$ and its saturation by $I$, as represented in Figure~\ref{fig:graph_saturation}, where types are implicit for simplicity sake. We consider the basic query $Q$ which counts the number of patients in the \texttt{oncology} department. 
\begin{lstlisting}
Q = SELECT (COUNT(DISTINCT ?patient) AS ?numPatients)
WHERE {
  ?patient :patientIn :Oncology .
}
\end{lstlisting}

We further restrict $\mathcal{D}$ to only contain saturated database, considering that the curator systematically saturates the database as Q would e.g. returns 0 on $D^{ex}$. 

If an attacker can start from a prior $D_0$, guess a database $D'$ with one more piece of data (\textit{i.e.} one more edge), then saturate that $D'$ into a database $I(D') \in \mathcal{D}$, then $I(D')$ will be considered by the attacker as part of the attack space. \textit{This means that an attacker considers a specific saturated database if its prior is a subgraph (unbounded) neighbor of one of this database's antecedents}, i.e. a graph that saturates to it. In order to provide adequate protection, the curator should also consider them as part of the defense space.

In Fig.~\ref{fig:TrimmedPrior}, database $D_0^{ex}$ represents the attacker's knowledge. Note that in this case, $D_0^{ex} \notin \mathcal{D}$ since there exists a \textit{doctor} that does not work in any \textit{department}.  

  \begin{figure}[h]
    \centering
    \begin{subfigure}{0.45\textwidth}
    \centering
    \begin{tikzpicture}[->, node distance=2cm, scale=0.6, transform shape,  
        main/.style={circle, draw, minimum size=1.5cm, inner sep=0pt, align=center, font=\small}]

        \node[main] (dept1) {Psychiatry};
        \node[main, right of=dept1,xshift=15pt] (dept2) {Oncology};
        \node[main, below of=dept1] (smith) {DrSmith};
        \node[main, right of=smith,xshift=15pt] (adam) {DrAdam};
        \node[main, below of=smith] (p1) {P1};
        \node[main, right of=p1] (p2) {P2};
        \node[main, right of=p2,xshift=15pt] (p3) {P3};

 consider       \draw[->] (adam) -- node[midway, right] {worksIn} (dept2);
        \draw[->] (smith) -- node[midway, left] {worksIn} (dept1);
        \draw[->] (smith) -- node[midway, left] {hasPatient} (p1);
        \draw[->] (smith) -- node[midway, right] {hasPatient} (p2);
        \draw[->] (adam) -- node[midway, right] {hasPatient} (p3);

    \end{tikzpicture}
    \caption{Antecedent of the true database $D^{ex}$}
    \label{fig:TrimmedTrue}
    \end{subfigure}
    \hfill 
    \begin{subfigure}{0.45\textwidth}
      \centering
\begin{tikzpicture}[->, node distance=2cm,scale=0.6, transform shape,  main/.style={circle, draw, minimum size=1.5cm, inner sep=0pt, align=center, font=\small}]
 
    \node[main] (dept1) {Psychiatry};
    \node[main, right of=dept1,xshift=50pt] (dept2) {Oncology};
    \node[main, below of=dept1] (smith) {DrSmith};
    \node[main, right of=smith,xshift=50pt] (adam) {DrAdam};
    \node[main, below of=smith] (p1) {P1};
    \node[main, right of=p1] (p2) {P2};
    \node[main, right of=p2] (p3) {P3};

    \draw[->] (adam) -- node[midway, right] {worksIn} (dept2);
    \draw[->] (smith) -- node[midway, right] {worksIn} (dept1);

    \draw[->] (smith) -- node[midway, left] {hasPatient} (p1);
    \draw[->] (smith) -- node[midway, right] {hasPatient} (p2);
    \draw[->] (adam) -- node[midway, right] {hasPatient} (p3);
    \draw[->, xshift=10pt, draw=blue] (p1) to[out=180, in=180] node[midway, text=blue] {patientIn} (dept1);
    \draw[->, xshift=7pt, draw=blue] (p2) to[out=50, in=0] node[midway, text=blue] {patientIn} (dept1);
    \draw[->, xshift=10pt, draw=blue] (p3) to[out=50, in=0] node[midway, text=blue] {patientIn} (dept2);
\end{tikzpicture}
    \caption{True database $I(D^{ex})$ }
     \label{fig:Saturated}
    \end{subfigure}
     \caption{True database and inferred information}
    \label{fig:graph_saturation}
\end{figure}
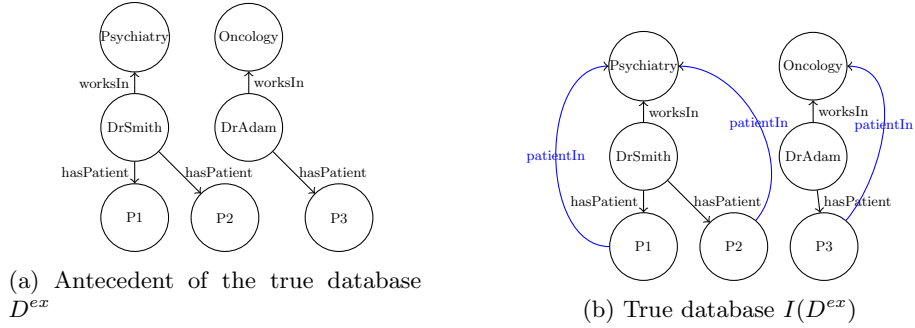

 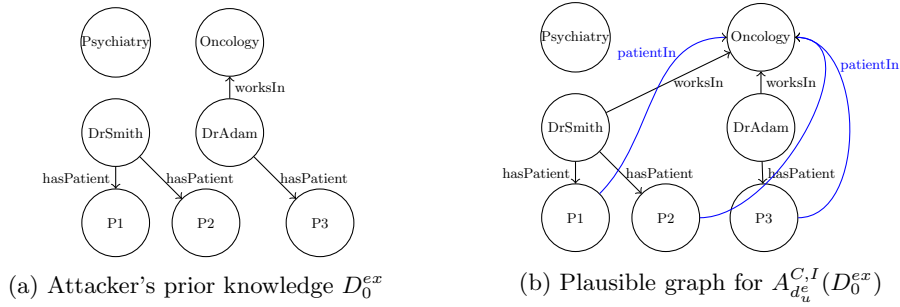
\begin{figure}[h]
    \centering

    \begin{subfigure}{0.45\textwidth} 
    \centering
    \begin{tikzpicture}[->, node distance=2cm,scale=0.6, transform shape,  main/.style={circle, draw, minimum size=1.5cm, inner sep=0pt, align=center, font=\small}]
    \node[main] (dept1) {Psychiatry};
    \node[main, right of=dept1,xshift=15pt] (dept2) {Oncology};
    \node[main, below of=dept1] (smith) {DrSmith};
    \node[main, right of=smith,xshift=15pt] (adam) {DrAdam};
    \node[main, below of=smith] (p1) {P1};
    \node[main, right of=p1] (p2) {P2};
    \node[main, right of=p2,xshift=15pt] (p3) {P3};

    \draw[->] (adam) -- node[midway, right] {worksIn} (dept2);

    \draw[->] (smith) -- node[midway, left] {hasPatient} (p1);
    \draw[->] (smith) -- node[midway, right] {hasPatient} (p2);
    \draw[->] (adam) -- node[midway, right] {hasPatient} (p3);

    \end{tikzpicture}
    \caption{Attacker's prior knowledge $D_0^{ex}$}
     \label{fig:TrimmedPrior}
    \end{subfigure}
    \hfill 
    \begin{subfigure}{0.45\textwidth} 
    \centering
\begin{tikzpicture}[->, node distance=2cm,scale=0.6, transform shape,  main/.style={circle, draw, minimum size=1.5cm, inner sep=0pt, align=center, font=\small}]

    \node[main] (dept1) {Psychiatry};
    \node[main, right of=dept1,xshift=60pt] (dept2) {Oncology};
    \node[main, below of=dept1] (smith) {DrSmith};
    \node[main, right of=smith,xshift=60pt] (adam) {DrAdam};
    \node[main, below of=smith] (p1) {P1};
    \node[main, right of=p1] (p2) {P2};
    \node[main, right of=p2,xshift=5pt] (p3) {P3};

    \draw[->] (adam) -- node[midway, right] {worksIn} (dept2);
    \draw[->] (smith) -- node[midway, right] {worksIn} (dept2);

    \draw[->] (smith) -- node[midway, left] {hasPatient} (p1);
    \draw[->] (smith) -- node[midway, right] {hasPatient} (p2);
    \draw[->] (adam) -- node[midway, right] {hasPatient} (p3);
    \draw[->, xshift=10pt, draw=blue] (p1) to[out=45, in=180] node[near end, left, text=blue] {patientIn} (dept2);
    \draw[->, xshift=10pt, draw=blue] (p2) to[out=0, in=0] (dept2);
    \draw[->, xshift=10pt, draw=blue] (p3) to[out=0, in=0] node[near end, right, text=blue] {patientIn} (dept2);
\end{tikzpicture}
    \caption{Plausible graph for $A_{d_u^e}^{C, I}(D_0^{ex})$ }
     \label{fig:SaturatedNeighbour}
    \end{subfigure}
     \caption{Example of attacker knowledge and saturation of a possible neighbor}
    \label{fig:graph_comparison} 
\end{figure}

The attacker knows  the  rule 
I = \{\texttt{hasPatient(?x, ?y)} $\land$  \texttt{worksIn(?x, ?z)} $\Rightarrow$ \texttt{patientIn(?y, ?z)} \}, considers the neighbors of $D_0^{ex}$, then saturates them. 
Fig.~\ref{fig:Saturated} and Fig.~\ref{fig:SaturatedNeighbour} show 
two plausible saturated graphs for this attacker. This means that \textit{by construction} they differ by a single datum of information for the attacker. Hence, $\Delta_I Q \geq 2$. However, from the curator's point of view, if they were to use $d_b$-DP as a defense without further adjustment, the database pictured in Fig.~\ref{fig:SaturatedNeighbour}) would not be part of the defense space of the database $I(D_{0}^{ex})$ (Fig.~\ref{fig:Saturated}), as they are too far from each other (the edge distance between these graphs is 3, not 1). This means that these graphs will not be considered by the curator when computing the sensitivity of the query. 

Indeed, the direct $d_b$-distance neighbors of the database $I(D^{ex})$ (and actually many saturated graphs) will have the same answer to our query, as any alteration to a \texttt{worksIn} or \texttt{hasPatient} edge may have a knockdown effect on \texttt{patientIn} and would result in the database no longer be saturated (and hence $\notin \mathcal{D})$.

In fact, two saturated graphs can only be $d_b$-distance neighbors if their only difference
is swapping a patient between two doctors of same department, or changing the department of a patientless doctor.
Under those circumstances, as stated in the proposition, we find a sensitivity equal to 0 under $d_b$-DP on $\mathcal{D}$. 

The code and demo of our tool illustrating this possible mismatch are available at: \href{https://github.com/Yasmine-Hayder/Onto-Differential-Privacy}{https://github.com/Yasmine-Hayder/Onto-Differential-Privacy}

\end{proof}

\noindent\textbf{Consequence of this mismatch.} In this case, this ill-suitedness has \textit{drastic} consequences in particular when restricting the considered space to saturated databases:  following the formal definition of DP, a sensitivity equal to 0  would mean that a query with no added randomized mechanism (i.e. outputing the true result of the query) is $\varepsilon$ differentially private for all $\varepsilon > 0$. Indeed, since the result of the query does not vary among any neighborhood, the probability of obtaining a result is the same when applied to any adjacent databases.

However \textit{in reality}, the query's raw results \textit{do} reveal some sensitive information to our up-to-one attacker, which could derive its missing piece of information \textit{with absolute certainty}. The mismatch of attack and defense spaces would thus here lead to a dramatic \textbf{complete absence of protection under classical DP}.

\section{$(I,d_b)$-DP : A defense model against a semantic-aware attacker $A_{d_u}^{I}$}
\label{sec:solution}

In order to solve the sensitivity issue when using traditional DP when dealing with ontology-aware attackers,
we introduce onto-DP, or $(I,d_b)$-DP, based on a novel distance notion that contains a reasoning process.
This distance is constructed to have a defense space matching semantic-aware attackers. As such, it considers saturated graphs, and neighbors in this distance are not neighbors in the sense of $d_b$,  but rather have antecedents that are $d_b$ neighbors. Hence, they are also unbounded neighbors of a common graph, matching with some attacker's prior. This is illustrated in Fig.~\ref{fig:FullDistance} and formalized as follows.

\begin{definition}[$(I,d_b)$-onto Distance]
Let $d_b$ be a (bounded) distance.
The $(I, d_b)$-ontology aware  distance is defined by its neighborhoods, then classically extended. For $(D,D')\in \mathcal{D}^2$, $D'$ is a neighbor of $D$ if and only if there exists an antecedent $D^{(-1)}$ of $D$ and $D'^{(-1)}$ of $D'$ w.r.t. $I$ such that $d_b(D^{(-1)}, D'^{(-1)}) = 1$.

\end{definition}

\noindent \textbf{Consequence.} It is immediate that $(\noI, d_b)$-ontology aware distance is $d_b$.
\begin{figure}[h] 
    \centering
\begin{tikzpicture}[->, node distance=2cm,scale=0.7, transform shape,  main/.style={draw, minimum size=1.5cm, inner sep=0pt, align=center, font=\small}]
    \node[main] (gbot) {$D_0$};
    \node[main, above left of=gbot,xshift=-30pt] (g0) {$D^{(-1)}$};
    \node[main, above right of=gbot,xshift=30pt] (g1) {$D'^{(-1)}$};
    \node[main, above of=g0, xshift=-30pt, yshift=20pt] (g) {$D$};
    \node[main, above of=g1, xshift=30pt, yshift=20pt] (g') {$D'$};

    \begin{scope}[on background layer]
    \draw[color=gray, fill=gray, opacity=0.25] ($(g0)+(-1,1)$) rectangle ($(g1)-0.5*(g1)-0.5*(g0)+(1,-2)$);
    \node[below of=gbot, yshift=10pt] (unsat) {Unsaturated Databases};
    \end{scope}
1
    \draw[<->] (gbot) to[out=180, in=270] node[midway, below] {unbounded} (g0);
    \draw[<->] (gbot) to[out=0, in=270] node[midway, below] {unbounded} (g1);
    \draw[<->] (g0) to node[midway, above] {bounded} (g1);

    \draw[->, dashed] (g0) to[out=90, in=270] node[midway, below] {saturate} (g);
    \draw[->, dashed] (g1) to[out=90, in=270] node[midway, below] {saturate} (g');
    \draw[<->] (g) to node[midway, above] {onto neighbors} (g');

\end{tikzpicture}
    \caption{$(I,d)$ neighborhood pattern}
     \label{fig:FullDistance}
\end{figure}
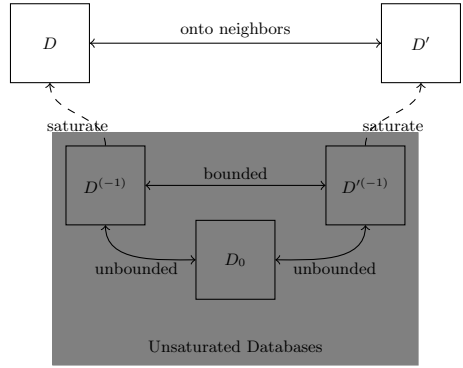

\begin{theorem}[Onto-DP is well-suited w.r.t. semantic aware attacker]\label{prop:FullWellSuited}
    Let $d_u,d_b$ be paired distances, unbounded and bounded respectively.
    For any $I$ inference rules,
    ($I, d_b$)-DP is a well-suited defense to the $A_{d_u}^{I}$ up-to-one class of attackers

\end{theorem}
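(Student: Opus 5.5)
We prove the set equality of Definition~\ref{def:WellSuited} by double inclusion, for a fixed $D\in\mathcal D$ and with $N_{(I,d_b)}(D)=\{D' \mid D'=D \text{ or } D' \text{ is an } (I,d_b)\text{-onto neighbor of } D\}$. Throughout, the only tool is Definition~\ref{def:PairedDistance}, applied at the level of antecedents. We never apply it to the saturated databases themselves. This is exactly the move that the onto-distance was built to allow, and it mirrors the argument one would give for Lemma~\ref{prop:edgeWellSuited}, which is the special case $I=\noI$.

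\textbf{Inclusion $\supseteq$.} Let $D_0$ be such that $D\in A_{d_u}^{I}(D_0)$, and let $D'\in A_{d_u}^{I}(D_0)$. By Definition~\ref{def:Attack} we get antecedents $D^{(-1)}$ and $D'^{(-1)}$ with $I(D^{(-1)})=D$ and $I(D'^{(-1)})=D'$. Both contain $D_0$ and both are at $d_u$-distance $1$ from $D_0$. The right-to-left direction of Definition~\ref{def:PairedDistance} then gives $d_b(D^{(-1)},D'^{(-1)})=1$, so $D'$ is an onto-neighbor of $D$.

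One caveat concerns the case $D^{(-1)}=D'^{(-1)}$, or more generally $D'=D$. Here we would need $d_b(X,X)=1$, which fails for a genuine distance. This case is instead handled by the clause ``$\leq 1$'' of Definition~\ref{def:DefenseSpace}: if $D'=D$ then $D'\in N_{(I,d_b)}(D)$ trivially. If $D'\neq D$, then $D'^{(-1)}\neq D^{(-1)}$ and the paired property applies. I expect this bookkeeping to be the only delicate point in this direction. It requires reading the paired-distance equivalence as concerning distinct databases, which is consistent with the paper's usage.

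\textbf{Inclusion $\subseteq$.} First, $D$ itself must lie in some attack space. We pick any antecedent $D^{(-1)}$ of $D$ (for instance $D$, by idempotence of $I$) and then need some $D_0\subseteq D^{(-1)}$ with $d_u(D_0,D^{(-1)})=1$. This is the main obstacle: it requires that every database has an unbounded ``one-datum-removed'' sub-database. I would obtain it from the paired property applied to $D^{(-1)}$ and any $d_b$-neighbor of it, which yields such a $D_0$. Failing that, I would state it as the standing nonemptiness assumption implicit in Lemma~\ref{prop:edgeWellSuited}; databases that are empty of data are degenerate and can be treated separately.

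Next, let $D'$ be an onto-neighbor of $D$, with antecedents satisfying $d_b(D^{(-1)},D'^{(-1)})=1$. The left-to-right direction of Definition~\ref{def:PairedDistance} gives a $D_0$ contained in both antecedents, at $d_u$-distance $1$ from each. By Definition~\ref{def:Attack}, both $D=I(D^{(-1)})$ and $D'=I(D'^{(-1)})$ then belong to $A_{d_u}^{I}(D_0)$, so $D'$ lies in the union.

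Combining the two inclusions yields the equality for every $D$, hence the theorem.
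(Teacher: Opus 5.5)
Your proof is correct and follows the same approach as the paper's: double inclusion, applying the paired-distance equivalence to the antecedents rather than to the saturated databases, in both directions. Your extra bookkeeping addresses edge cases the paper's proof skips. These are the case $D'=D$, handled through the $\leq 1$ clause, and the requirement that $D$ itself lie in some attack space, which you treat as an assumption where it fails. Like the paper, which restricts to $D\in I(\mathcal{D})$, you also implicitly need $D$ to be saturated, since only then is $D$ its own antecedent.
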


\begin{proof}
Let us consider a database $D\in I(\mathcal{D})$,
and a $A_{d_u}^{I}$ up-to-one attacker of prior $D_0$ that considers it. This means that there exists a database $D^{(-1)}$, neighbor of $D_0$ according to $d_u$,
such that $I(D^{(-1)})=D$.
Any other databases considered by this attacker are $D'\in I(\mathcal{D})$ such that
there exists a database $D'^{(-1)}$, $d_u$ neighbor of $D_0$, whose saturation $I(D'^{(-1)})$ equals to $D'$.
Since $D^{(-1)}$ and $D'^{(-1)}$ are both $d_u$ neighbors of the same sub-database $D_0$,
they are $d_b$ neighbors by definition of paired distance. By definition, $D$ and $D'$ are neighbors in the $(I,d_u)$-distance.

Conversely, let us consider a couple of databases $(D,D')\in I(\mathcal{D})^2$, neighbors in the $(I, d_b)$-distance.
By definition, there exist $D^{(-1)}$ antecedent of $D$ and $D'^{(-1)}$ of $D'$
that are $d_b$ neighbors.
Since $d_b$ and $d_u$ are paired, this means that there exists some $D_0$,
subset and $d_u$-neighbor of both $D^{(-1)}$ and $D'^{(-1)}$.

This means that a $(I,d_u)$ up-to-one attacker of prior $D_0$ considers both $D$ and $D'$.

From both of these we conclude that the defense space of $D$ in the $(I, d_b)$-onto-DP is exactly the set of all databases considered by at least one $A_{d_u}^{I}$ up-to-one attacker that also considers $D$.
\end{proof}
    
\noindent\textbf{Takeaway.} Theorem~\ref{prop:FullWellSuited} is the main result of the article. It shows that it is possible to build a DP mechanism that will correctly evaluate the sensitivity of queries, in the presence of attackers that have knowledge of  inference rules on a database, by using our proposed $(I,d_b)$-distance.

\section{Related Work}
\label{sec:rw}

Since KGs are the traditional representation for knowledge centered databases, we provide herein an overview of existing work related to semantic-aware DP in such context. We also discuss work questioning attacks models for DP and investigating distances and neighborhoods for DP. Finally, we present the proposal closest to our own, DP approaches over correlated data. \\

\noindent\textbf{DP for KGs.} Even though KGs are the traditional representation for knowledge centered databases, there is surprisingly little work proposing semantic-aware DP approaches on KG. Standard DP approaches are oftentimes applied, for example,~\cite{hu2023quantifying} applies ``triple''-DP, which is equivalent to traditional edge-DP, to the problem of federated KG embedding.  \\

Reuben~\cite{reuben2018towards} was, to the best of our knowledge, the first to propose semantic-aware DP for edge-labeled directed graphs. This approach was limited to applying edge-DP on a subset of labels. Building on this work, Taki et al.~\cite{10.1007/978-3-031-15743-1_20} proposed a projection-based approach to reduce the sensitivity of queries on KG using QL-edge-DP.  Han et al.~\cite{han2022framework} proposed a similar idea where a set of sensitive relationships is specified. 

\noindent\textit{To the best of our knowledge, semantic-aware DP approaches for KG are limited to the consideration of sensitive and non-sensitive labels (or types), and there exists no work integrating constraints and inferences in DP approaches.} \\

\noindent\textbf{Attacks models on DP.} The efficiency of the protection of DP is reliant on the choice of an $\varepsilon$. However, the concrete guarantees such constraints provide is heavily reliant on the type of attack scenario considered.

In the reconstruction attacks~\cite{balle2022reconstructing},
one standard scenario is the \emph{informed attacker},
who knows everything about a model and its training data save for one specific element.
Our worst-case scenario is inspired by the implicit threat model of DP~\cite{DBLP:conf/sp/NasrSTPC21}.
Indeed, previous works~\cite{lee2011much} use worst-case attack scenarii
to measure the efficiency of $\varepsilon$-DP. Their illustrative choice is that
of an attacker that tries to identify members of a queriable subgroup of individuals,
but only misses one information to do so, a scenario which serves as an inspiration for our up-to-one attacker. Some works like \cite{soria2017individual,o2019bootstrap} explored relaxations of this worst-case attacker, but \cite{protivash2022reconstruction} demonstrates that such relaxations can lead to vulnerabilities.

\vspace{0.2cm}
\noindent\textbf{Questioning DP-distances and neighborhoods.}
Several existing works also study the necessary departures from classic distances and mechanisms
in scenarii where the neighborhoods given by classic distances are of variable believability or usefulness. 
In geo-indistinguishability~\cite{chatzikokolakis2015geo} (and more generally metric-DP~\cite{chatzikokolakis2013broadening}), which aims to privately publish someone's location,
some locations (e.g. a river, the sea) are considered unlikely answers that cannot realistically count as convincing decoys. \\

\noindent\textbf{DP over correlated data.} In many real cases like social networks, data is related and cannot be considered independent~\cite{liben2003link}. To deal with this, models like Pufferfish \cite{kifer2014pufferfish} and Bayesian Differential Privacy (BDP) \cite{yang2015bayesian} were developed.
These formalisms provide a very wide setting to customize privacy for dependent data by defining secrets, \textit{i.e.} alternative worlds that must be indistinguishable, and an attacker’s background knowledge.
Under such a setting, DP can be defined as requiring any attacker,
regardless of prior knowledge, to be unable to gain intel from an output.
In such a setting, one would model deterministic inferences as a restriction on attackers' background knowledge: an informed attacker would only consider databases where all inferences have been made.
While this accurately translates the definition of DP,
and lets the authors show that under any correlation, classical Laplace mechanisms allow for security leaks,
one of the keys to DP's ease of use is that
rather than checking for all attackers whether they can gain undue confidence on a specific secret,
one can focus on showing it for worst-case scenario attackers, that only lack one element of a database, and extending the property to the general case.
To find proper background knowledge for attackers that are able to use inference to deduce several facts from one guess is then necessary to have a notion of semantically aware DP one can hope to demonstrate on a given process.

\section{Conclusion}
\label{sec:ccl}
This paper explores the challenges of protecting sensitive information
against attackers with knowledge about the database semantics (i.e. data dependency). 
We introduced and investigated \textit{semantic-aware} attackers, who have knowledge of the inference rules associated with the database, and showed that traditional DP methods may underestimate the knowledge gained by such attackers, leading to privacy leaks. 
To address these issues, we proposed \textit{onto-DP}, an extension of existing differential privacy paradigms that enrich them with the consideration of inference rules.  \\

 We believe these results open exciting new research directions at the intersection of DP and semantically rich databases such as KGs. 

A natural next step is a systematic empirical evaluation of these semantic neighborhoods under realistic rule fragments and practical reasoners. In particular, we plan to quantify how inference impacts neighborhood size and global sensitivity, and to design concrete algorithms for efficiently computing neighborhoods under \textit{onto-DP}. Such an evaluation is essential to understand whether semantic-aware privacy guarantees remain practically attainable, or whether inference causes prohibitive sensitivity blow-ups.

Another promising direction is to extend our analysis to non-numeric queries, for example via the Exponential Mechanism\cite{DBLP:conf/focs/McSherryT07}, which is well-suited for categorical outputs. Since this mechanism relies on a notion of distance between plausible answers, understanding how semantic neighborhoods reshape this notion under inference rules becomes particularly important.

\begin{credits}
\subsubsection{\ackname}  This work was supported by grants ANR-22-PECY-0002 (IPoP project),
ANR-23-CMAS-0019 (CyberINSA project), and ANR-23-CE23-0032 (DIFPRIPOS project) funded by the ANR.

\subsubsection{\discintname}
The authors have no competing interests to declare that are
relevant to the content of this article. 
\end{credits}

\bibliographystyle{unsrt}
\bibliography{sample-base}

\appendix

\end{document}